\def\BState{\State\hskip-\ALG@thistlm}
\def\CC{\mathcal{C} }
\def\DD{\mathcal{D}}
\def\HH{\mathcal{H}}
\def\RR{\mathcal{R}}
\begin{document}

\title{\bf On Algorithms for $L$-bounded Cut Problem\thanks{This research
was partially supported by project GA15-11559S of GA \v{C}R.}
}
\author{Petr Kolman}
\institute{Department of Applied Mathematics\\ 
	Faculty of Mathematics and Physics\\ 
Charles University, Prague, Czech Republic\\
\email{kolman@kam.mff.cuni.cz} }


\maketitle
\begin{abstract}
Given a graph $G=(V,E)$ with two distinguished vertices $s,t\in V$ and an
integer parameter $L>0$, an {\em $L$-bounded cut} is a subset $F$ of edges
(vertices) such that the every path between $s$ and $t$ in $G\setminus F$ has
length more than $L$. The task is to find an $L$-bounded cut of minimum
cardinality. 

Though the problem is very simple to state and has been studied since the
beginning of the 70's, it is not much understood yet. The problem is known to
be $\cal{NP}$-hard to approximate within a small constant factor even for
$L\geq 4$ (for $L\geq 5$ for the vertex cuts).  
On the other hand, the best known
approximation algorithm for general graphs has approximation ratio only
$\mathcal{O}({n^{2/3}})$ in the edge case, and $\mathcal{O}({\sqrt{n}})$ in the
vertex case, where $n$ denotes the number of vertices. 

We show that for planar graphs, it is possible to solve both the edge- and the
vertex-version of the problem optimally in time $\mathcal{O}(L^{3L}n)$.  That is,
the problem is fixed parameter tractable (FPT) with respect to $L$ on planar
graphs. Furthermore, we show that the problem remains FPT even for bounded
genus graphs, a super class of planar graphs.

Our second contribution deals with approximations of the vertex version of the
problem. We describe an algorithm that for a given a graph $G$, its tree
decomposition of treewidth $\tau$ and vertices $s$ and $t$ computes a
$\tau$-approximation of the minimum $L$-bounded $s-t$ vertex cut;  
if the decomposition is not given, then the approximation ratio
is $\mathcal{O}(\tau \sqrt{\log \tau})$.
For graphs with treewidth bounded by $\mathcal{O}(n^{1/2-\epsilon})$ for any
$\epsilon>0$, but not by a constant, this is the best approximation in terms
of~$n$ that we are aware of.
\end{abstract}

\section{Introduction}

The subject of this paper is a variation of the classical $s$-$t$ cut problem,
namely the 
{\em minimum $L$-bounded edge (vertex) cut problem}: given a graph
$G=(V,E)$ with two distinguished vertices $s,t\in V$ and an integer parameter
$L>0$, find a subset $F$ of edges (vertices) of minimum cardinality such that 
every path between $s$ and $t$ in $G\setminus F$ has length more than $L$.
The problem has been studied in various contexts since the beginning of the
70's (e.g.,~\cite{AK:71,LNP:78,BEH:10})
and occasionally it appears also under the name the {\em short paths interdiction 
problem}~\cite{KBB:08}. 

Closely related is another $\cal{NP}$-hard problem, namely the {\em shortest
path most vital edges and vertices problem} (e.g. \cite{BGV:89,BKS:95,BNN:15}):
given a graph $G$, two distinguished vertices $s$ and $t$ and an integer $k$,
the task is to find a subset $F$ of $k$ edges (vertices) whose removal
maximizes the increase in the length of the shortest path between $s$ and $t$.
If we introduce an additional parameter -- the desired minimum distance of $s$
and $t$ -- we obtain a parameterized version of the $L$-bounded cut problem:
given a graph $G$, two distinguished vertices $s$ and $t$ and integers $k$ and
$L$, does there exist a subset $F$ of at most $k$ edges (vertices) such that
every path between $s$ and $t$ in $G\setminus F$ has length more than $L$?  We
also note that $\cal{NP}$-hardness of the shortest path most vital edges
(vertices) problem immediately implies $\cal{NP}$-hardness of the $L$-bounded
edge (vertex) cut problem, and vice versa.

In contrast to many other cut problems on graphs (e.g., multiway cut, multicut,
sparsest cut, balanced cut, maximum cut, multiroute cut), the known approximations of
the minimum $L$-bounded cut problem are substantially weaker. In this work
we focus on algorithms for restricted graph classes, namely planar graphs,
bounded genus graphs
and graphs with bounded, yet not constant, treewidth,
and provide new results for the $L$-bounded cut problem on them; the results
for planar graphs solve one of the open problems suggested by 
Bazgan et al.~\cite{BNN:15}.
We also remark that the generic approximation scheme of
Czumaj~et.~al~\cite{CHLN:05} for $\cal{NP}$-hard problems in graphs with
superlogarithmic treewidth is not applicable for the $L$-bounded cut problem.

\paragraph{Related Results.}
$\cal{NP}$-hardness of the shortest path most vital edges problem (and, thus,
as noted above, also of the $L$-bounded cut problem) was proved by Bar-Noy et
al.~\cite{BKS:95}.  The best known approximation algorithm for the minimum
$L$-bounded cut problem on general graphs has approximation ratio only
$\mathcal{O}({\min\{L,n/L\}}) \subseteq \mathcal{O}({\sqrt{n}})$ for the vertex
case and $\mathcal{O}({\min\{L,n^2/L^2,\sqrt{m}\})} \subseteq
\mathcal{O}({n^{2/3}})$ for the edge case, where $m$ denotes the number of
edges and $n$ the number of vertices~\cite{BEH:10}.  On the lower bound side,
the edge version of the problem is known to be $\cal{NP}$-hard to approximate
within a factor of $1.1377$ for $L\geq 4$, and the vertex version for $L\geq
5$~\cite{BEH:10}; for smaller values of $L$ the problem is solvable in
polynomial time~\cite{LNP:78,MM:10}.  Independently, Khachiyan et
al.~\cite{KBB:08} proved that a version of the problem with edge lengths is
$\cal{NP}$-hard to approximate within a factor smaller than $1.36$.  Recently,
assuming the Unique Games Conjecture, Lee proved that the problem is
$\cal{NP}$-hard to approximate within {\em any} constant factor~\cite{Lee:17}.

An instance of the $L$-bounded cut problem on a graph of treewidth $\tau$  can
be cast as an instance of constraint satisfaction problem (CSP) with domain of
size $L+1$ and treewidth $\tau$\footnote{For the sake of completeness, in
Appendix A we provide details about this reduction.}. As CSP instances with
treewidth bounded by $\tau$ and domain by $D$ can be solved in time $O(D^\tau
n)$~\cite{Freuder:90} (when a tree decomposition of treewidth $\tau$ of the
constraint graph is given), the problem is fixed parameter tractable with
respect to $L$ and $\tau$. Dvo\v r\'ak and Knop~\cite{DK:15} provide a direct
proof of the same result with a slightly worse dependance on $L$ and $\tau$;
they also prove that the problem is $W[1]$-hard when parameterized by the
treewidth only. 

From the point of view of parameterized complexity, the problem was also studied
by Golovach and Thilikos~\cite{GT:11}, Bazgan et al.~\cite{BNN:15} 
and by Fluschnik et al.~\cite{FHNN:16}.

For planar graphs, the problem is known to be $\cal{NP}$-hard~\cite{FHNN:15,PS:16},
too, and the problem has no polynomial-size kernel when parameterized by the
combination of $L$ and the size of the optimal solution~\cite{FHNN:16}. 

For more detailed overview of other related results and applications, 
we refer to the papers~\cite{KBB:08,BEH:10,MM:10}.

\paragraph{Our Contribution.}
We show that on planar graphs, both the edge- and the vertex-version of the
problem are solvable in time $\mathcal{O}(L^{3L}n)$. That is, we show that on
planar graphs the minimum $L$-bounded cut problem is fixed parameter tractable (FPT)
with respect to $L$. 
Furthermore, we show that the problem remains FPT even for
bounded genus graphs, a super class of planar graphs.
This is in contrast with the situation for general graphs 
-- the problem is NP-hard even for $L=4$ and $L=5$, for the edge- and vertex-versions,
respectively. 

Our second contribution is a $\tau$-approximation algorithm for the vertex
version of the problem, if a tree decomposition of width $\tau$ is given.  If
the decomposition is not given, then using the currently best known tree
decomposition algorithms, we obtain an $\mathcal{O}(\tau\sqrt{\log
\tau})$-approximation for general graphs with treewidth $\tau$, and an
$\mathcal{O}(\tau)$-approximation for planar graphs, graphs excluding a fixed
minor and graphs with treewidth bounded by $\mathcal{O}(\log n)$.  For graphs
with treewidth bounded by $\tau=\mathcal{O}(n^{1/2-\epsilon})$ for any
$\epsilon>0$, but not by a constant, in terms of $n$, this is the best
approximation we are aware of.

Our results are based on a combination of observations about the structure
of $L$-bounded cuts and various known results. The proofs are straightforward
but apparently non-obvious, considering the attention given to the problem in
recent years.

\section{$L$-Bounded Cut is FPT on Planar Graphs and Bounded Genus Graphs}

Throughout the paper, given a graph $G=(V,E)$ and $u,v\in V$,
we use $d(u,v)$ to denote the shortest path distance between $u$ and $v$,
that is, the number of edges on the shortest path.

Our main tools are the following two well-known results.
\begin{theorem}[Robertson and Seymour~\cite{RS:84},
Bodlaender~\cite{Bodlaender:88}]\label{thm:treewidth} The treewidth of a planar
graph with radius $d$ is at most $3d$.  
\end{theorem}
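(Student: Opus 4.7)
The plan is to give a direct construction of a tree decomposition starting from a BFS spanning tree of the planar graph $G$. Since $G$ has radius $d$, I pick a center $r \in V(G)$ with $d(r,v) \le d$ for all $v$, and let $T$ be a BFS tree rooted at $r$, so every root-to-leaf path in $T$ has at most $d+1$ vertices. I fix a plane embedding of $G$ and, without loss of generality, triangulate each bounded face of $G$ by inserting virtual edges, since adding edges can only increase treewidth and therefore a bound on the triangulation transfers back to $G$.

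The central combinatorial observation I would verify first is that for every non-tree edge $uv$ of $G$, the unique cycle in $T + uv$ is the concatenation of the root paths $P(u)$ and $P(v)$ of $u$ and $v$ in $T$ together with the edge $uv$, and hence has length at most $2d+1$. I then invoke the classical planar duality fact that the edges of $G^{\ast}$ dual to the co-tree edges of $G$ form a spanning tree $T^{\ast}$ of $G^{\ast}$, whose nodes are the faces of $G$. I use $T^{\ast}$ as the skeleton of the tree decomposition: for each face $f$ with triangular boundary $\{a,b,c\}$, set
\[
X_f \;=\; P(a) \cup P(b) \cup P(c).
\]
Since the three root paths share at least the root $r$, one obtains $|X_f| \le 3(d+1) - 2 = 3d+1$, so the width of the decomposition is at most $3d$.

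It remains to verify the two tree-decomposition axioms. Covering is straightforward: every vertex $v$ appears in $P(v)$ and hence in $X_f$ for any face $f$ incident to $v$, and every edge of $G$ is either a tree edge covered by a root path or the non-tree chord of some triangular face. The running-intersection property is the main technical obstacle. For each vertex $w \in V(G)$, the bags containing $w$ are exactly those faces $f$ whose triangular boundary touches the subtree $T_w$ of $T$ rooted at $w$. Planarity ensures that these faces occupy a simply-connected region of the plane bounded by a single closed curve (the outer boundary of $T_w$ together with its incident non-tree edges), and the piece of $T^{\ast}$ cut out by this region is therefore connected, which is exactly what running intersection requires. Establishing this last connectivity claim is the crux; it reduces to the standard planar duality between cuts in $G$ and cycles in $G^{\ast}$ applied to the edges crossing out of $T_w$, and once it is in hand the bound $\mathrm{tw}(G) \le 3d$ follows.
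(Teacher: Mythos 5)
The paper does not prove this statement at all: Theorem~\ref{thm:treewidth} is imported as a black box from Robertson--Seymour and Bodlaender, so there is no in-paper argument to compare against. Your construction is the standard textbook proof of this classical fact (BFS tree from a center, triangulate, hang the bags $P(a)\cup P(b)\cup P(c)$ on the interdigitating dual spanning tree $T^{\ast}$), and the arithmetic $|X_f|\le 3(d+1)-2=3d+1$ does give width $3d$. Two points need attention, though. First, a small slip: you must triangulate the outer face as well (or otherwise control its bag); if only bounded faces are triangulated, the outer face is still a node of $T^{\ast}$ and its bag, the union of the root paths of all its corners, need not have size $O(d)$. Second, the running-intersection step, which you correctly identify as the crux, is left at the level of a plausibility claim (``the faces touching $T_w$ occupy a simply-connected region, hence the piece of $T^{\ast}$ there is connected''). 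A spanning tree of the dual restricted to the dual vertices lying in a disk is \emph{not} connected in general, so the argument really does need the extra structure you allude to, made explicit.

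To close it: for $w\neq r$, both $V(T_w)$ and its complement induce connected subgraphs (the complement contains the tree $T-V(T_w)$), so $\delta(V(T_w))$ is a bond and its dual is a cycle $C^{\ast}$ in $G^{\ast}$. Every edge of this bond except the single parent edge of $w$ is a non-tree edge, so $C^{\ast}$ minus the dual of the parent edge is a path $P^{\ast}$ that lies entirely in $T^{\ast}$ and visits exactly the faces incident to both sides of the cut; all of these faces carry $w$ in their bag. Any remaining face whose bag contains $w$ has all three corners in $V(T_w)$, hence lies strictly inside the closed curve $C^{\ast}$, and since a path in $G^{\ast}$ can only leave that region through a dual vertex on $C^{\ast}$, its unique $T^{\ast}$-path to $P^{\ast}$ stays among faces whose bags contain $w$. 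This gives the connectivity of $\{f: w\in X_f\}$ in $T^{\ast}$ and completes your proof; with these two repairs the argument is sound and yields exactly the bound the paper cites.
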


\begin{theorem}[Freuder~\cite{Freuder:90}]\label{thm:CSP}
CSP instances with treewidth bounded by $\tau$, domain by $D$ and size by $n$ are
solvable in time $O(D^{\tau} n)$.
\end{theorem}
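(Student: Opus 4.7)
The plan is to give the textbook dynamic program along a tree decomposition of the constraint graph, which is essentially Freuder's original argument. Given a tree decomposition $(T,\{X_t\}_{t\in V(T)})$ of width $\tau$, I would first convert it in linear time into a \emph{nice} tree decomposition with $O(n)$ nodes of four types: leaf (empty bag), introduce (adds one variable to the bag), forget (removes one variable from the bag), and join (two children with bag identical to the parent's). This normalization keeps the number of nodes linear in $n$ and lets me describe the transitions cleanly.

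Rooting $T$ arbitrarily, the algorithm sweeps from the leaves to the root. At each node $t$ it maintains a Boolean table $A_t$ indexed by the assignments of the variables in $X_t$ to the domain of size $D$; the entry for an assignment $\alpha$ is $1$ iff $\alpha$ can be extended to a satisfying assignment of all variables appearing anywhere in the subtree rooted at $t$, while respecting every constraint whose scope lies entirely in that subtree. Each table has at most $D^{\tau+1}$ entries. The transitions are standard: at a leaf the table is trivial; at an introduce node one enumerates the new value and checks any constraint whose scope becomes fully contained in $X_t$ for the first time; at a forget node one existentially projects out the removed variable via $A_t(\alpha)=\bigvee_{x\in D} A_{t'}(\alpha,x)$; at a join node one takes the coordinatewise AND of the children's tables, which by construction are indexed by the same bag.

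Correctness relies on the two defining properties of a tree decomposition: every constraint's scope sits inside some bag, and the set of bags containing a fixed variable induces a connected subtree of $T$. Together these guarantee that no constraint is missed and that partial assignments are propagated consistently, so the root table has a $1$-entry iff the CSP is satisfiable. Each of the $O(n)$ nodes is processed in time proportional to its table size times a constant number of local look-ups and constraint checks; summing up and absorbing the factor for a constant-arity constraint evaluation gives the claimed $O(D^{\tau} n)$ bound. The only mild obstacle is the bookkeeping needed to ensure each constraint is evaluated exactly once and never forgotten, which the AND-of-Booleans formulation above handles automatically by charging each constraint to the highest bag containing its scope.
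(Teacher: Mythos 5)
The paper does not prove this statement at all---it is imported verbatim as a citation to Freuder, so there is no in-paper argument to compare against. Your dynamic program over a nice tree decomposition is the canonical proof of this result, and its overall structure (clique containment of constraint scopes in bags, connectivity of the occurrence subtrees, leaf/introduce/forget/join transitions) is sound. Two points deserve attention, however. First, the CSP that this paper actually needs (see its Appendix) is the \emph{minimization} version with hard and soft constraints: the objective is to minimize the number of unsatisfied soft constraints, not merely to decide satisfiability. Your Boolean tables only settle the decision problem; to match the theorem as it is used in Theorems~\ref{thm:planarFPT} and~\ref{thm:boundedgenusFPT} you must replace the Boolean entries by integer costs (minimum number of violated soft constraints over all extensions), take coordinatewise sums at join nodes with a correction for double-counted constraints, and minimize over the projected variable at forget nodes. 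This is routine but not automatic from what you wrote. Second, a bag of a width-$\tau$ decomposition has up to $\tau+1$ variables, so your tables have $D^{\tau+1}$ entries and the running time you actually derive is $O(D^{\tau+1}n)$, which is the form in which Freuder's bound is usually stated; the paper's $O(D^{\tau}n)$ elides this factor of $D$. Neither issue affects the paper's downstream conclusions (the FPT bounds absorb the extra factor), but as a proof of the literal statement your argument is off by that factor and covers only the decision variant.
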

Since the $L$-bounded cut problem on a graph of treewidth $\tau$ can be cast as
a CSP instance with treewidth $\tau$, domain of size $L+1$ and size $n$,
the minimum $L$-bounded cut problem is solvable in time $\mathcal{O}(L^{\tau}
n)$, as already stated in the introduction and explained in Appendix A.

The main result of this section says that the $L$-bounded cut problem on planar
graphs is fixed parameter tractable, with respect to the parameter $L$.
\begin{theorem}\label{thm:planarFPT}
The minimum $L$-bounded edge (vertex) cut problem on planar graphs is solvable in time
$O(L^{3L} n)$.
\end{theorem}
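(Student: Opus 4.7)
My plan is to exploit the fact that only a small-radius portion of $G$ is relevant to any $L$-bounded $s$-$t$ cut, and then combine Theorems~\ref{thm:treewidth} and~\ref{thm:CSP} on the resulting subgraph.

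First, let $V_L = \{v \in V : d_G(s,v) \leq L\}$ and let $H = G[V_L]$ be the subgraph of $G$ induced by $V_L$. Every $s$-$t$ path of length at most $L$ in $G$ consists entirely of vertices within distance $L$ from $s$, so it lies inside $H$; conversely, no edge or vertex outside $H$ can help cut such a path. Hence the minimum $L$-bounded $s$-$t$ cut in $G$ has the same cardinality as the minimum $L$-bounded $s$-$t$ cut in $H$, for both the edge and the vertex versions of the problem.

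Next, I would bound the radius of $H$. For any $v \in V_L$ a shortest $s$-$v$ path in $G$ has length at most $L$, and each of its vertices is within distance $L$ from $s$, so lies in $V_L$; thus $d_H(s,v) = d_G(s,v) \leq L$, so $s$ has eccentricity $\leq L$ in $H$ and the radius of $H$ is at most $L$. Since $H$ is a subgraph of the planar graph $G$ it is planar, and Theorem~\ref{thm:treewidth} yields $\mathrm{tw}(H) \leq 3L$. Using Bodlaender's linear-time algorithm I would compute a tree decomposition of $H$ of width at most $3L$ in time $f(L)\cdot n$ for some function $f$.

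Finally, I would cast the $L$-bounded cut instance on $H$ as a CSP with variables indexed by $V(H)$, domain $\{0,1,\ldots,L\}$ (representing post-cut distance from $s$, with $L$ standing for ``farther than $L$''), and one constraint per edge of $H$ -- exactly the reduction from Appendix~A -- so that the constraint graph is $H$ itself and hence has treewidth at most $3L$. Theorem~\ref{thm:CSP} with $D=L+1$ and $\tau=3L$ then solves it in time $O((L+1)^{3L}\cdot |V(H)|) = O(L^{3L} n)$, dominating the cost of constructing $H$ and its decomposition. I expect the only nontrivial step to be the initial restriction to $H$: without it, the treewidth of $G$ can be as large as $\Theta(\sqrt{n})$ and Theorem~\ref{thm:CSP} gives nothing useful; once we observe that $L$-bounded cuts only care about the ball of radius $L$ around $s$, the two cited theorems combine mechanically, and the minor point that $s$ and $t$ themselves cannot be cut in the vertex version is handled by fixing the corresponding CSP variables.
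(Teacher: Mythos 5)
Your proof is correct and follows essentially the same route as the paper: restrict attention to an induced subgraph of radius at most $L$ around $s$, apply the planar radius--treewidth bound (Theorem~\ref{thm:treewidth}) to get treewidth $3L$, and finish with the CSP reduction and Theorem~\ref{thm:CSP}. The only (immaterial) difference is that you induce on the ball $\{v : d(s,v)\leq L\}$ whereas the paper uses the smaller set $\{v : d(s,v)+d(v,t)\leq L\}$; both sets contain every $s$--$t$ path of length at most $L$ and both induced subgraphs have radius at most $L$, so the feasibility and optimality arguments go through identically.
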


\begin{proof}
We prove the theorem for the edge version; the proof for the vertex version is
analogous.

Given a graph $G=(V,E)$, $s,t\in V$ and an integer $L$, let $V'=\{v\in V\ |\
d(s,v)+d(t,v)\leq L\}$. In words, $V'$ is the subset of vertices lying on paths of
length at most $L$ between $s$ and $t$. 
Without loss of generality we assume that $d(s,t)\leq L$ -- otherwise the problem is trivial.
Let $G'$ be the subgraph of $G$ induced
by $V'$.  Note that the radius of $G'$ is at most $L$ as, by definition,
$d(s,v)\leq L$ for every $v\in V'$.

The set $V'$ (and, thus, the subgraph $G'$) can be computed using a linear time
algorithm for single-source shortest paths~\cite{KRRS:94} that we run twice,
once for $s$ and once for $t$. Note that both $s$ and $t$ belong to $V'$.

Obviously, $G'$ is a planar graph, and by Theorem~\ref{thm:treewidth}, its 
treewidth is at most $3L$. We solve the
$L$-bounded problem for $G'$ and $s$ and $t$ by Theorem~\ref{thm:CSP}. Let $F\subseteq E(G')$ be the
optimal solution. We claim that $F$ is an optimal solution for the original instance
of the problem on $G$ as well. To prove {\em feasibility} of $F$, assume, 
for contradiction, that there exists
an $s-t$-path $p$ of length at most $L$ in $(V,E\setminus F)$. As there is no
such path in $G'\setminus F$, $p$ has to use at least one vertex $v$ from $V\setminus V'$.
However, this yields a contradiction: on one hand, $d(s,v)+d(v,t)\leq L$ as $v$ is
on an $s-t$-path of length at most $L$, on the other hand, $d(s,v)+d(v,t)>L$
as $v$ is not in $V'$. 
Concerning the {\em optimality} of $F$, it is sufficient to note that the size of
an optimal solution for the subgraph $G'$ is a lower bound on the size of an
optimal solution for $G$.
\qed
\end{proof}

Theorem~\ref{thm:treewidth} was generalized by Eppstein~\cite{Eppstein:00}
to graphs of bounded genus and this result makes it possible to generalize
Theorem~\ref{thm:planarFPT} also to graphs of bounded genus.
\begin{theorem}[Eppstein~\cite{Eppstein:00}]\label{thm:bounded_genus}
There exists a constant $\hat{c}$ such that 
the treewidth of every graph with radius $d$ and genus $g$ is at most $\hat{c} d g$.
\end{theorem}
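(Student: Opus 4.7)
The plan is to reduce the bounded-genus case to the planar case of Theorem~\ref{thm:treewidth} by finding a small vertex set $X$ whose removal makes $G$ planar without blowing up the radius too much, and then padding every bag of the resulting planar tree decomposition with $X$.

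Fix a cellular embedding of $G$ on a surface of genus $g$, and let $r$ be a center vertex of $G$, so that the BFS tree $T$ rooted at $r$ has depth at most $d$. For every non-tree edge $e$, the fundamental cycle $C_e$ in $T+e$ has length at most $2d+1$. A standard tree-cotree argument from topological graph theory then produces $2g$ non-tree edges $e_1,\ldots,e_{2g}$ whose fundamental cycles $C_{e_1},\ldots,C_{e_{2g}}$ form a basis of the first $\mathrm{GF}(2)$-homology of the surface. Let $X$ be the union of their vertex sets; then $|X|\le 2g(2d+1)=O(dg)$.

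Next I would argue that cutting the surface along $C_{e_1}\cup\cdots\cup C_{e_{2g}}$ opens it into a disk, so that the induced subgraph on $V\setminus X$ embeds planarly in that disk. Each connected component of $G\setminus X$ inherits a subtree of $T$ of depth at most $d$, rooted at some vertex that was adjacent to $X$ in $G$; picking this vertex as a new root shows that every component has radius at most $2d$. Theorem~\ref{thm:treewidth} therefore yields a tree decomposition of $G\setminus X$ of width at most $6d$, and inserting $X$ into every bag produces a decomposition of $G$ of width at most $6d + |X| = O(dg)$, which gives the claimed bound for some absolute constant $\hat{c}$.

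The hard part will be the topological step: producing a basis of short $T$-fundamental cycles for $H_1$ of the surface and verifying that removing the vertex support of this basis actually yields a planar graph (and controlling how the radius behaves in the disconnected pieces that may arise). This is precisely where the elementary BFS-layering used for Theorem~\ref{thm:treewidth} does not suffice and where the topological graph theory developed in~\cite{Eppstein:00} is genuinely needed.
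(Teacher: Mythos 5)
First, a point of order: the paper contains no proof of Theorem~\ref{thm:bounded_genus} to compare against --- it is imported verbatim from Eppstein~\cite{Eppstein:00} and used as a black box. Your overall architecture (tree--cotree partition, $2g$ leftover edges, the union $X$ of their fundamental cycles of total size $O(dg)$, planarity of $G\setminus X$, and padding every bag with $X$) is the standard route and is close in spirit to Eppstein's actual argument, which removes one short homology-nontrivial cycle at a time instead of all $2g$ at once.

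There is, however, a genuine gap at the step ``every component of $G\setminus X$ has radius at most $2d$.'' A component of $G\setminus X$ is in general a union of \emph{many} subtrees of the forest $T\setminus X$, glued together by non-tree edges, and nothing bounds its radius in terms of $d$: all short paths to the BFS root may pass through $X$. Concretely, let $G$ be a long cycle $C_n$ plus a hub $r$ adjacent to every cycle vertex, with a copy of $K_7$ attached at $r$. Then $G$ has radius $1$ and genus $1$, the BFS tree is a star at $r$, every fundamental cycle is a triangle through $r$, so $r\in X$; consequently $G\setminus X$ contains the bare cycle $C_n$ as a component of radius about $n/2$, and Theorem~\ref{thm:treewidth} cannot be applied to it with $d=\mathcal{O}(1)$. (The conclusion survives here --- $C_n$ has treewidth $2$ --- but not by your argument.) The repair, and essentially what Eppstein proves, is a strengthened planar base case: if a planar graph has a spanning forest of depth at most $d$ whose roots can all be joined to a single new apex vertex without destroying planarity, then its treewidth is $O(d)$. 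That version is stable under deleting the cycle vertices, since the neighbours of $X$ serve as roots of the residual forest; the plain ``radius $d$ implies treewidth $3d$'' statement is not. So the step you flag as hard (the topology) is actually fine modulo standard facts, while the radius-control step you treat as routine is where the sketch breaks.
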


In the same way as we used Theorem~\ref{thm:treewidth} to prove 
in Theorem~\ref{thm:planarFPT} 
fixed parameter tractability of the $L$-bounded cut for planar graphs, we can
use Theorem~\ref{thm:bounded_genus} to prove fixed parameter tractability
of the $L$-bounded cut problem on graphs of bounded genus.
Thus, we obtain the following theorem.

\begin{theorem}\label{thm:boundedgenusFPT}
The minimum $L$-bounded edge (vertex) cut problem on graphs with genus $g$ 
is solvable in time $O(L^{\hat{c} g L} n)$.
\end{theorem}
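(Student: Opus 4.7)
The plan is to follow the proof of Theorem~\ref{thm:planarFPT} almost line for line, with Theorem~\ref{thm:bounded_genus} replacing Theorem~\ref{thm:treewidth}. I would again argue only the edge case, since the vertex case is analogous.

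First, given $G=(V,E)$ of genus at most $g$, $s,t\in V$ and an integer $L$ (with $d(s,t)\le L$ as otherwise the instance is trivial), I would define $V'=\{v\in V\mid d(s,v)+d(t,v)\le L\}$ and let $G'$ be the subgraph of $G$ induced by $V'$. As before, $V'$ can be computed in linear time via two single-source shortest path computations from $s$ and from $t$. By definition, $d(s,v)\le L$ for every $v\in V'$, so the radius of $G'$ is at most $L$.

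Next, I would observe that $G'$ has genus at most $g$: any embedding of $G$ on a surface of genus $g$ restricts to an embedding of the subgraph $G'$ on the same surface. Applying Theorem~\ref{thm:bounded_genus} to $G'$ therefore yields a treewidth bound of $\hat{c} g L$. Casting the $L$-bounded cut instance on $G'$ as a CSP with domain $L+1$ and treewidth $\hat{c} g L$ (as used in the proof of Theorem~\ref{thm:planarFPT} and detailed in Appendix~A) and invoking Theorem~\ref{thm:CSP} produces an optimal cut $F\subseteq E(G')$ in time $O(L^{\hat{c} g L}n)$.

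Finally, I would verify that $F$ is optimal for the original instance on $G$ using exactly the argument from Theorem~\ref{thm:planarFPT}: feasibility follows because any hypothetical $s$--$t$ path of length at most $L$ in $G\setminus F$ missing from $G'\setminus F$ must use a vertex $v\notin V'$, contradicting the definition of $V'$; and optimality follows since the optimum on the subgraph $G'$ lower-bounds the optimum on $G$. The only mild subtlety, and the step most worth checking carefully, is the genus-monotonicity of subgraphs invoked above, which is however standard; everything else is a verbatim reuse of the planar argument.
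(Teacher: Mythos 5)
Your proposal is correct and follows exactly the route the paper intends: the paper's own ``proof'' is just the remark that the planar argument goes through verbatim with Theorem~\ref{thm:bounded_genus} in place of Theorem~\ref{thm:treewidth}, and you have spelled out precisely that substitution, including the (standard, and correctly flagged) fact that genus is monotone under taking subgraphs.
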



\section{$\tau$-approximation for $L$-bounded Vertex Cuts}


In this section we describe an algorithm for the $L$-bounded vertex cut problem
whose approximation ratio is parametrized by the the treewidth of (a given tree
decomposition of) the underlying graph.  For notions related to the treewidth
of a graph and tree decomposition we stick to the standard terminology as given
in the book by Kloks~\cite{Kloks:94}. To distinguish vertices of 
a graph and of its tree decomposition, we call the vertices
of the tree decomposition {\em nodes}.

Given a graph $G=(V,E)$ with two distinguished vertices $s, t\in V$, an integer
parameter $L > 0$ and a rooted tree decomposition $T$ of treewidth
$\tau$ of the graph $G$, the $L$-bounded
cut is computed using the recursive procedure $L$-bounded\_cut$(G,T,s,t,L)$
described bellow.
Throughout this section we assume that the vertices $s$ and $t$ are not
connected by an edge, as otherwise there is no $L$-bounded $s-t$
vertex cut in $G$.

For a node $a$ of the tree decomposition $T$, we
use $B(a)\subseteq V$ to denote the bag of the node $a$. 
In the description of the algorithm, 
mincut$(G,s,t)$ is a procedure that returns a minimum vertex
$s-t$ cut in $G$; given in addition a vertex set
$C\subset V$, prune$(G,T,C)$ is a procedure that
deletes from $G$ the vertices in $C$ and modifies the decomposition tree $T$ by
deleting these vertices from all bags.  By $d(G,s,t)$ we denote the distance
between $s$ and $t$ in $G$; given a connected subtree $R$ of a rooted tree
$T$, a {\em deepest node} in $R$ is a node in $R$ with no child in $R$.
Given a node $b$ of the rooted tree decomposition $T$, we denote by $T_b$
the subtree of $T$ consisting of $b$ and of all its descendants,
and by $G_b$ the subgraph of $G$ induced by vertices in bags of $T_b$;
similarly, we denote by $\bar T_b$ the subtree of $T$ consisting of
all nodes in $T$ except for the descendants of $b$ and by $\bar G_b$
the subgraph of $G$ induced by vertices in bags of $\bar T_b$.

\begin{algorithm}
\caption{$L$-bounded\_cut$(G,T,s,t,L)$}
\begin{algorithmic}[1]
\If {$d(G,s,t)>L$} \textbf{return} $\emptyset$
\ElsIf{$\forall a\in V(T): \ |B(a)\cap \{s,t\}|\leq 1$} \textbf{return} mincut$(G,s,t)$
\Else \ $R\gets \{a\in V(T)\ | \ s,t\in B(a) \text{ and } d({G_a},s,t)\leq L\}$
\State $b \gets$ a deepest node in $R$
\If {$B(b)=\{s,t\}$} 
\State $S_1 \gets$ $L$-bounded\_cut$(G_b,T_b,s,t,L)$
\State $S_2 \gets$ $L$-bounded\_cut$(\bar G_b,\bar T_b,s,t,L)$
\State \textbf{return} $S_1 \cup S_2$
\Else
\EndIf
\State $(G',T') \gets$ prune$(G,T,B(b)\setminus\{s,t\})$  
\State $S' \gets$ $L$-bounded\_cut$(G',T',s,t,L)$
\State \textbf{return} $S'\cup B(b)\setminus\{s,t\}$
\EndIf
\end{algorithmic}
\end{algorithm}

\begin{theorem}\label{thm:approx}
Given a graph $G$, its rooted tree decomposition $T$ of treewidth $\tau$,
vertices $s$ and $t$ and an integer $L$,
the above algorithm finds in polynomial time
a $\tau$-approximation
of the minimum $L$-bounded vertex cut.
\end{theorem}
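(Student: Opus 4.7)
The plan is to prove the theorem by induction on the recursion, verifying \emph{feasibility} and the \emph{approximation ratio} separately. Feasibility is immediate in each branch: the case $d(G,s,t)>L$ returns $\emptyset$, which is trivially feasible; the branch returning $\textsc{mincut}(G,s,t)$ outputs a standard vertex cut, which cuts every $s$--$t$ path and hence every short one; in the split branch $B(b)=\{s,t\}$ the tree-decomposition connectivity property guarantees that no edge of $G$ joins $V(G_b)\setminus B(b)$ to $V(\bar G_b)\setminus B(b)$, so every $s$--$t$ path lies entirely in $G_b$ or entirely in $\bar G_b$ and the union of the two recursive cuts suffices; in the peel branch a short $s$--$t$ path of $G$ that avoids $B(b)\setminus\{s,t\}$ is still a short path of $G'$ and is cut by the recursive solution $S'$.

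For the ratio $|\mathrm{ALG}(G)|\le\tau\cdot\mathrm{OPT}(G)$ the \textsc{mincut} branch is easy: taking the last bag $a_s$ on the tree path from an $s$-bag to a $t$-bag that still contains $s$ and the next bag $a'$ on this path, the set $B(a_s)\cap B(a')$ is a standard $s$--$t$ vertex separator missing $\{s,t\}$ of size at most $|B(a_s)|-1\le\tau$, while $\mathrm{OPT}(G)\ge 1$ because the algorithm passed $d(G,s,t)\le L$. In the split branch the connectivity argument used for feasibility gives the additive equality $\mathrm{OPT}(G)=\mathrm{OPT}(G_b)+\mathrm{OPT}(\bar G_b)$, and the inductive hypothesis applied to each side yields the desired bound on $G$.

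The core step is the peel branch $|B(b)|>2$, and it rests on the following \textbf{structural lemma}: since $b$ is a \emph{deepest} node of $R$, every $s$--$t$ path of length at most $L$ in $G_b$ passes through $B(b)\setminus\{s,t\}$. Otherwise such a path $P=s,w_1,\dots,w_k,t$ would have all $w_i\in V(G_b)\setminus B(b)$; tree-decomposition connectivity forces each $w_i$'s bag-subtree into the subtree $T_{c_i}$ of some child $c_i$ of $b$, and because consecutive $w_i$'s share an edge their child subtrees must coincide, collapsing all $w_i$ into a single $T_c$; the bags covering the edges $(s,w_1)$ and $(w_k,t)$ then force $s,t\in B(c)$ and $d(G_c,s,t)\le L$, contradicting the depth of $b$. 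Armed with the lemma, let $F^{*}$ be optimal for $G$ and $G'=G-(B(b)\setminus\{s,t\})$. If $F^{*}$ meets $B(b)\setminus\{s,t\}$, then $F^{*}\cap V(G')$ is a feasible cut of $G'$ of size at most $\mathrm{OPT}(G)-1$. Otherwise $F^{*}$ must cut the short $G_b$-path at some vertex $w\in V(G_b)\setminus B(b)$; since $w\notin V(\bar G_b)$ and, by the same case analysis, every short $s$--$t$ path of $G'$ lies in $\bar G_b$, the vertex $w$ cuts no short path of $G'$, so $(F^{*}\setminus\{w\})\cap V(G')$ is also a feasible cut of $G'$ of size at most $\mathrm{OPT}(G)-1$. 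Either way $\mathrm{OPT}(G')\le\mathrm{OPT}(G)-1$, and combining with the inductive bound $|\mathrm{ALG}(G')|\le\tau\cdot\mathrm{OPT}(G')$ and $|B(b)|-2\le\tau-1$ gives
\[
|\mathrm{ALG}(G)|\le|\mathrm{ALG}(G')|+|B(b)|-2\le\tau\cdot\mathrm{OPT}(G')+(\tau-1)\le\tau\cdot\mathrm{OPT}(G)-1.
\]
The only subtle ingredient is exactly this structural lemma and the two-case argument that extracts a ``saving of one'' between $\mathrm{OPT}(G)$ and $\mathrm{OPT}(G')$; the polynomial runtime and the rest of the induction are then standard bookkeeping.
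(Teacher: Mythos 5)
Your proof is correct and follows essentially the same route as the paper: induction on the recursion depth, the bag-separator bound of $\tau$ in the mincut branch, additivity of the optimum in the split branch, and in the peel branch the key structural fact that every $L$-bounded $s$--$t$ path in $G_b$ meets $B(b)\setminus\{s,t\}$ (the paper's Lemma~\ref{lem:induction}), yielding $opt(G)\geq opt(G')+1$. The only difference is one of detail: you spell out the child-subtree argument behind the structural lemma, which the paper merely asserts as a consequence of choosing $b$ deepest in $R$.
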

\begin{proof}
For notational simplicity, we use the term an {\em $L$-bounded path}
for a path of length at most $L$, and we use $T^s$ to denote the
subtree of $T$ induced by nodes with $s$ in their bag, that is,
induced by the set $\{a\in V(T)\ | \ s\in B(a)\}$; the subtree
$T^t$ is defined analogously.

We start by showing the {\em correctness} of the algorithm:
we claim that the algorithm finds a feasible solution for the $L$-bounded
vertex cut problem. The proof is by induction on the recursion depth.
For the final recursive calls, dealt with in steps 1 and 2, the claim
is obvious from the description of the algorithm.

{\em Inductive step.}
Consider a run of the procedure with a graph $G$ and its tree decomposition
$T$, and let $R$ and  $b$ be the objects defined by the procedure in steps 3
and 4. We distinguish two cases (step 5): 
i) the bag $B(b)$ contains the two vertices $s$ and
$t$ only, and, 
ii) the bag $B(b)$ contains in addition to $s$ and $t$ at least one
other vertex. 

In the first case, realizing that $\{s,t\}$ is a separator in $G$, we note
that any $L$-bounded $s-t$ path in $G$ either belongs entirely to $G_b$ or 
belongs entirely to $\bar G_b$. Thus, $S_1\cup S_2$ is an $L$-bounded
$s-t$ cut in $G$.  

In the second case, let $G'$ be the subgraph defined by the procedure in step
10; recall that $G'=G\setminus (B(b)\setminus\{s,t\})$.
As any $L$-bounded path in $G$ either intersects the set
$B(b)\setminus\{s,t\}$ or appears in $G'$, we conclude that
the set $S'\cup B(b)\setminus\{s,t\}$ is an $L$-bounded $s-t$ cut in $G$.

For the sake of completeness we also note that every recursive call
of the procedure is applied to a smaller graph, implying a 
finiteness of the algorithm; moreover, as the recursive calls in steps
6 and 7 are applied to almost disjoint subgraphs of $G$ (the two subgraphs
intersect in $s$ and $t$ only), the running time of the procedure is
polynomial in the size of the input graph.

Now we turn to the {\em performance} of the algorithm.  Let $cost(G)$ be the
cost of the solution of our algorithm and let $opt(G)$ be the size of the
optimal solution, for the graph $G$.  Similarly as for the proof of
correctness, we prove by induction on the recursion depth  
that $cost(G)\leq  \tau \cdot opt(G)$ where $\tau$ is the treewidth of $T$. 

We distinguish two cases of the final recursive call:
i) for subgraphs with no $L$-bounded $s-t$ path, dealt with in step 1,
the claim is obvious as $cost(G)=0$ in this case;
ii) for subgraphs such that the corresponding tree decomposition
does not contain any bag with both $s$ and $t$, dealt with in step 2,
note that for any node $c$ on the unique path connecting the disjoint
subtrees $T^s$ and $T^t$ of the decomposition tree $T$, the set 
$B(c)\setminus\{s,t\}$ is an $s-t$ cut of size at most $\tau$.

{\em Inductive step.}
We start with a simple lemma.
\begin{lemma}\label{lem:induction}
The $L$-bounded paths in $G_b$ are
internally vertex disjoint with the $L$-bounded paths in $G'$. 
\end{lemma}
\begin{proof}
The choice of the node $b$ in step 4 implies that every $L$-bounded $s-t$ path 
in $G_b$ uses at least one node from $B(b)\setminus\{s,t\}$.
Thus, there is no $L$-bounded
$s-t$ path in $G_b\setminus(B(b)\setminus\{s,t\})$.
As the vertex set $B(b)$ is a vertex cut (separator) in $G$ that disconnects
the subgraph $G_b\setminus B(b)$ of $G$ from the rest of the graph,
we conclude that every $s-t$ path in $G'$ is internally disjoint with every 
$s-t$ path in $G_b$.
\qed
\end{proof}

As in the proof of correctness, 
we distinguish the two cases $B(a)=\{s,t\}$ and $B(a)\not =\{s,t\}$.

In the first case, exploiting the fact that $B(a)=\{s,t\}$ is a separator in
$G$, we obtain $opt(G)=opt(G_b)+opt(\bar G_b)$. Plugging in the inductive
assumptions, $cost(G_b)\leq  \tau \cdot opt(G_b)$ and $cost(\bar G_b)\leq  \tau
\cdot opt(\bar G_b)$ and the earlier observed fact that $cost(G)\leq
cost(G_b)+cost(\bar G_b)$, we derive that $cost(G)\leq \tau \cdot opt(G)$.

In the second case,
from the description of the algorithm and the previous analysis
we know that $cost(G) \leq cost(G') +  \tau$. Exploiting the fact that by the choice of $b$
there is at least one
$L$-bounded $s-t$ path in $G_b$, Lemma~\ref{lem:induction} implies
$opt(G)\geq 1+opt(G')$.  Combining these observations with the 
inductive assumption $cost(G')\leq \tau \cdot opt(G')$, 
we obtain again $cost(G) \leq \tau \cdot opt(G)$.
This concludes the proof of the Theorem. \qed
\end{proof}

In the case that we are not given a tree decomposition on input, we just
start by computing one using one of the best known tree decomposition
algorithms.
Feige et al.~\cite{FHL:08} describe a polynomial time algorithm that
yields, for a given graph of treewidth $\tau$, a tree decomposition
with treewidth $\mathcal{O}(\tau\sqrt{\log \tau})$; for planar graphs and for graphs
excluding a fixed minor, the treewidth is $\mathcal{O}(\tau)$ only.
Similarly, for graphs with treewidth bounded by $\mathcal{O}(\log n)$,
Bodlaender et al.~\cite{BDD:16} describe how to find in polynomial time
a tree decomposition of treewidth $\mathcal{O}(\tau)$.
Depending on the input graph, one of these algorithms is used
to obtain a desired tree decomposition.
Thus, we obtain the following corollary.
\begin{corollary}\label{cor:approx}
There exists an $\mathcal{O}(\tau\sqrt{\log\tau})$-approximation
algorithm for the minimum $L$-bounded vertex cut on graphs with treewidth $\tau$;
for planar graphs, graphs excluding a fixed minor
and graphs with treewidth bounded by $\mathcal{O}(\log n)$,
there exists an $\mathcal{O}(\tau)$-approximation algorithm.
\end{corollary}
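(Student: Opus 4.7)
The plan is to derive the corollary as a direct consequence of Theorem~\ref{thm:approx} combined with the off-the-shelf tree-decomposition approximation algorithms cited in the paragraph immediately preceding the corollary. Theorem~\ref{thm:approx} already gives, for any input graph $G$ together with a tree decomposition of width $w$, a $w$-approximation of the minimum $L$-bounded vertex cut in polynomial time. So the only remaining task is to produce, in polynomial time, a tree decomposition whose width is close to the optimal treewidth $\tau$ of $G$, and then feed it into the algorithm of Theorem~\ref{thm:approx}. The approximation ratio we obtain is then exactly the width of the decomposition we manage to compute.

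First I would invoke the algorithm of Feige, Hajiaghayi, and Lee~\cite{FHL:08}: on any input graph of treewidth $\tau$ it produces in polynomial time a tree decomposition of width $w=\mathcal{O}(\tau\sqrt{\log\tau})$. Running the algorithm of Theorem~\ref{thm:approx} on this decomposition yields a $w$-approximation, which is an $\mathcal{O}(\tau\sqrt{\log\tau})$-approximation for the minimum $L$-bounded vertex cut. This handles the general case.

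For the three special graph classes mentioned in the corollary, I would simply swap in a sharper tree-decomposition routine. For planar graphs and, more generally, graphs excluding a fixed minor, the same paper~\cite{FHL:08} gives a decomposition of width $\mathcal{O}(\tau)$; for graphs with treewidth $\tau=\mathcal{O}(\log n)$, the result of Bodlaender et al.~\cite{BDD:16} yields a decomposition of width $\mathcal{O}(\tau)$ in polynomial time. In each case, Theorem~\ref{thm:approx} applied to this decomposition produces an $\mathcal{O}(\tau)$-approximation, as claimed.

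There is essentially no obstacle: the only thing one needs to check is that the chain ``compute decomposition of width $w$, then invoke Theorem~\ref{thm:approx}'' produces a polynomial-time algorithm whose approximation guarantee is $w$ with respect to the optimum on $G$ (not with respect to the optimum on some decomposition-dependent proxy). This is immediate because Theorem~\ref{thm:approx} phrases its approximation guarantee in terms of the treewidth of the \emph{given} decomposition, so the ratio transfers verbatim. Polynomial running time follows because both the decomposition step and the invocation of Theorem~\ref{thm:approx} run in polynomial time, and the class-dependent choice of decomposition algorithm can be made in advance based on the hypothesis on the input graph.
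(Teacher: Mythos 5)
Your proposal is correct and follows exactly the paper's own argument: compute a tree decomposition via Feige et al.~\cite{FHL:08} (or Bodlaender et al.~\cite{BDD:16} for the logarithmic-treewidth case) and feed it to the algorithm of Theorem~\ref{thm:approx}, whose guarantee is stated in terms of the width of the given decomposition. Nothing further is needed.
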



\section{Open problems}

A natural open problem for planar graphs is whether the shortest path most 
vital edges (vertices) problem is fixed parameter tractable on them, with respect
to the number $k$ of deleted edges (vertices). Despite the close relation
of the $L$-bounded cut problem and the shortest path most 
vital edges (vertices) problem, fixed parameter tractability of one of them
does not seem to easily imply fixed parameter tractability of the other problem.

The $\tau$-approximation for $L$-bounded vertex cuts is based
on the fact that bags in a tree decomposition yield 
{\em vertex} cuts of size at most equal the width of the decomposition.
Unfortunately, this is not the case for {\em edge} cuts -- one can
easily construct bounded treewidth graphs with no small balanced edge cuts.  Thus, 
another open problem is to look for  better approximation
algorithms for minimum $L$-bounded {\em edge} cuts, for graphs with treewidth
bounded by $\tau$.

Yet another challenging and more general open problem is to narrow the gap
between the upper and lower bounds on the approximation ratio of algorithms for
$L$-bounded cut for general graphs.

Finally, we note that the $L$-bounded edge cut problem in a graph $G=(V,E)$ 
is a kind of a {\em vertex ordering}
problem. We are looking for a mapping $\ell$ 
from the vertex set $V$ to the set $\{0,1,\ldots,L,L+1\}$
such that $\ell(s)=0$, $\ell(t)=L+1$ and the objective is to minimize 
the number of edges $\{u,v\}\in E$ for which $|\ell(u)-\ell(v)|>1$; 
given a feasible solution $F\subseteq E$,
the lengths of the shortest paths from $s$ to all other vertices in $G\setminus F$
yield such a mapping of cost $|F|$.
There are plenty of results dealing with {\em linear} vertex ordering problems where
one is looking for a bijective mapping from the vertex set $V$ to the set 
$\{1,2,\ldots,n\}$ minimizing some objective function. However, the requirement that
the mapping be bijective to a set of size $n$ seems crucial in the design and analysis of 
approximation algorithms for these problems. The question is whether it is possible to
obtain good approximations for some nontrivial {\em non-linear} vertex ordering problems.

\paragraph{Acknowledgments.} We thank Martin Kouteck\'y and Hans Raj Tiwary 
for stimulating discussions and an anonymous referee for suggestions 
yielding an improvement by a factor $\log n$ in the approximation
ratio in Theorem~\ref{thm:approx}.

\bibliographystyle{alpha}
\bibliography{lcuts}

\newcommand{\etalchar}[1]{$^{#1}$}
\begin{thebibliography}{BDD{\etalchar{+}}16}

\bibitem[AK71]{AK:71}
J.~Ad\'amek and V.~Koubek.
\newblock Remarks on flows in network with short paths.
\newblock {\em Commentationes Mathematicae Universitatis Carolinae},
  12(4):661--667, 1971.

\bibitem[BDD{\etalchar{+}}16]{BDD:16}
H.~L. Bodlaender, P.~G. Drange, M.~S. Dregi, F.~V. Fomin, D.~Lokshtanov, and
  M.~Pilipczuk.
\newblock A $c^k n$ $5$-approximation algorithm for treewidth.
\newblock {\em SIAM J. Comput}, 45(2):317--378, 2016.

\bibitem[BEH{\etalchar{+}}10]{BEH:10}
G.~Baier, T.~Erlebach, A.~Hall, E.~K{\"{o}}hler, P.~Kolman, O.~Pangr{\'{a}}c,
  H.~Schilling, and M.~Skutella.
\newblock Length-bounded cuts and flows.
\newblock {\em ACM Trans. Algorithms}, 7(1):4:1--4:27, 2010.
\newblock Preliminary version in Proc. of {ICALP} 2006.

\bibitem[BGV89]{BGV:89}
M.~O. Ball, B.~L. Golden, and R.~V. Vohra.
\newblock Finding the most vital arcs in a network.
\newblock {\em Operations Research Letters}, 8(2):73 -- 76, 1989.

\bibitem[BNKS95]{BKS:95}
A.~Bar-Noy, S.~Khuller, and B.~Schieber.
\newblock The complexity of finding most vital arcs and nodes.
\newblock Technical Report CS-TR-3539, Univ. of Maryland, Dept. of Computer
  Science, November 1995.

\bibitem[BNN15]{BNN:15}
C.~Bazgan, A.~Nichterlein, and R.~Niedermeier.
\newblock A refined complexity analysis of finding the most vital edges for
  undirected shortest paths.
\newblock In {\em Proc. of Algorithms and Complexity - 9th International
  Conference (CIAC)}, pages 47--60, 2015.

\bibitem[Bod88]{Bodlaender:88}
H.~L. Bodlaender.
\newblock Planar graphs with bounded treewidth.
\newblock Technical Report RUU-CS-88-14, Univ. Utrecht, Dept. of Computer
  Science, 1988.

\bibitem[CHLN05]{CHLN:05}
A.~Czumaj, M.~M. Halld\'orsson, A.~Lingas, and J.~Nilsson.
\newblock Approximation algorithms for optimization problems in graphs with
  superlogarithmic treewidth.
\newblock {\em Information Processing Letters}, 94(2):49--53, 2005.

\bibitem[DK15]{DK:15}
P.~Dvo\v{r}\'ak and D.~Knop.
\newblock Parametrized complexity of length-bounded cuts and multi-cuts.
\newblock In {\em Proc. of 12 Annual Conference on Theory and Applications of
  Models of Computation ({TAMC})}, pages 441--452, 2015.

\bibitem[Epp00]{Eppstein:00}
D.~Eppstein.
\newblock Diameter and treewidth in minor-closed graph families.
\newblock {\em Algorithmica}, 27(3):275--291, 2000.

\bibitem[FHL08]{FHL:08}
U.~Feige, M.~T. Hajiaghayi, and J.~R. Lee.
\newblock Improved approximation algorithms for minimum weight vertex
  separators.
\newblock {\em SIAM J. Comput}, 38(2):629--657, 2008.
\newblock Preliminary version in Proc. of {STOC} 2005.

\bibitem[FHNN15]{FHNN:15}
T.~Fluschnik, D.~Hermelin, A.~Nichterlein, and R.~Niedermeier.
\newblock Fractals for kernelization lower bounds, with an application to
  length-bounded cut problems.
\newblock {\em CoRR}, abs/1512.00333, 2015.

\bibitem[FHNN16]{FHNN:16}
T.~Fluschnik, D.~Hermelin, A.~Nichterlein, and R.~Niedermeier.
\newblock Fractals for kernelization lower bounds, with an application to
  length-bounded cut problems.
\newblock In {\em Proc. of 43rd International Colloquium on Automata,
  Languages, and Programming (ICALP)}, pages 25:1--25:14, 2016.

\bibitem[Fre90]{Freuder:90}
E.~C. Freuder.
\newblock Complexity of {$K$}-tree structured constraint satisfaction problems.
\newblock In {\em Proc. of the 8th National Conference on Artificial
  Intelligence}, pages 4--9, 1990.

\bibitem[GT11]{GT:11}
P.~A. Golovach and D.~M. Thilikos.
\newblock Paths of bounded length and their cuts: Parameterized complexity and
  algorithms.
\newblock {\em Discrete Optimization}, 8(1):72--86, 2011.

\bibitem[KBB{\etalchar{+}}08]{KBB:08}
L.~Khachiyan, E.~Boros, K.~Borys, K.~M. Elbassioni, V.~Gurvich, G.~Rudolf, and
  J.~Zhao.
\newblock On short paths interdiction problems: Total and node-wise limited
  interdiction.
\newblock {\em Theory Comput. Syst}, 43(2):204--233, 2008.

\bibitem[KK15]{KK:15}
P.~Kolman and M.~Kouteck{\'y}.
\newblock Extended formulation for {CSP} that is compact for instances of
  bounded treewidth.
\newblock {\em Electr. J. Comb}, 22(4):P4.30, 2015.

\bibitem[Klo94]{Kloks:94}
Ton Kloks.
\newblock {\em Treewidth: Computations and Approximations}, volume 842 of {\em
  Lecture Notes in Computer Science}.
\newblock Springer, 1994.

\bibitem[KRRS94]{KRRS:94}
P.~Klein, S.~Rao, M.~Rauch, and S.~Subramanian.
\newblock Faster shortest-path algorithms for planar graphs.
\newblock In {\em Proc. of the 26th ACM Symposium on Theory of Computing
  ({STOC})}, pages 27--37, 1994.

\bibitem[Lee17]{Lee:17}
E.~Lee.
\newblock Improved hardness for cut, interdiction, and firefighter problems.
\newblock In {\em Proc. of 44rd International Colloquium on Automata,
  Languages, and Programming (ICALP)}, pages 92:1--92:14, 2017.

\bibitem[LNP78]{LNP:78}
L.~Lov\'{a}sz, V.~{Neumann-Lara}, and M.~D. Plummer.
\newblock Mengerian theorems for paths of bounded length.
\newblock {\em Periodica Mathematica Hungarica}, 9:269--276, 1978.

\bibitem[MM10]{MM:10}
A.~R. Mahjoub and S.~T. McCormick.
\newblock Max flow and min cut with bounded-length paths: complexity,
  algorithms, and approximation.
\newblock {\em Math. Program}, 124(1-2):271--284, 2010.

\bibitem[PS16]{PS:16}
F.~Pan and A.~Schild.
\newblock Interdiction problems on planar graphs.
\newblock {\em Discrete Applied Mathematics}, 198:215--231, 2016.

\bibitem[RS84]{RS:84}
N.~Robertson and P.~D. Seymour.
\newblock Graph minors. {III.} planar tree-width.
\newblock {\em J. Comb. Theory, Ser. {B}}, 36(1):49--64, 1984.

\end{thebibliography}

\newpage
\begin{appendix}
\section{Appendix $L$-bounded Cut as a CSP Instance}
An instance $Q=(V,\DD,\HH,\CC)$ of CSP~\cite{KK:15}
consists of
\begin{itemize}
\item a set of {\em variables} $z_v$, one for each $v\in V$; without loss of
generality we assume that $V = \{1,\ldots,n\}$,
\item a set $\DD$ of finite {\em domains} $D_v\subseteq \RR$ (also
denoted $D(v)$), one for each $v\in V$,
\item a set of {\em hard constraints}
  $\HH \subseteq \{C_{U}\ |\ U \subseteq V \}$ where each
hard constraint $C_{U} \in \HH$ with $U=\{i_1, i_2,\dots,i_k\}$ and
$i_1 < i_2 < \cdots < i_k$, is a $|U|$-ary relation
$C_U \subseteq D_{i_1}\times D_{i_2}\times \cdots \times D_{i_k}$,
\item a set of {\em soft constraints}
  $\CC \subseteq \{C_U \ |\ U \subseteq V\}$ where each soft constraint
$C_{U} \in \CC$ with $U=\{i_1, i_2,\dots,i_k\}$ and
$i_1 < i_2 < \cdots < i_k$, is a $|U|$-ary relation
$C_{U}\subseteq D_{i_1}\times D_{i_2}\times \cdots \times D_{i_k}$.
\end{itemize}
%
For a vector $z^{}=(z^{}_1, z_2, \ldots,z^{}_n)$ and $U=\{i_1,
i_2,\dots,i_k\}\subseteq V$ with $i_1 < i_2 < \cdots < i_k$, we define
the {\em projection of} $z$ on $U$ as
$z^{}|_U=(z^{}_{i_1}, z_{i_2}, \ldots, z^{}_{i_k})$.
A vector $z\in \RR^n$ {\em satisfies the constraint} $C_U \in
\CC\cup \HH$ if and only if $z|_U \in C_U$.
We say that a vector
$z^{\star}=(z^{\star}_1,\ldots,z^{\star}_n)$ is {\em a feasible solution} for
$Q$ if $z^{\star} \in D_1\times D_2\times \ldots\times D_n$ and
$z^{\star}$ satisfies every hard constraint $C\in \HH$.
In the {\em maximization} ({\em minimization}, resp.) version of CSP, 
the
task is to find a {feasible solution} that maximizes (minimizes,
resp.) the number of \emph{satisfied} (unsatisfied, resp.) soft
constraints; the {\em cost} of a feasible solution is the number
of satisfied (unsatisfied, resp.) soft constraints.

The \emph{constraint graph} of $Q$ is defined as $H=(V,E)$
where $E= \{\{u,v\} \ |\  \exists C_{U} \in \CC\cup \HH \textrm{ s.t. } 
\{u,v\} \subseteq U\}$. We say that a {\em CSP instance $Q$ has bounded
treewidth} if the constraint graph of $Q$ has bounded treewidth.

Given an $L$-bounded edge cut instance $G=(V,E)$ with $s,t\in V$ and an integer~$L$, 
we construct the corresponding minimization CSP instance
$Q=(V,\DD,\HH,\CC)$ as follows. 
The set of variables of $Q$ coincides with the set $V$ of vertices of $G$
and for each $v\in V$, the corresponding domain $D_v$ 
is $\{0,1,\ldots,L\}$. The set of hard constraints $\HH$ consists
of two constraints, $C_{\{s\}}=\{0\}$ and $C_{\{t\}}=\{L\}$. The set
of soft constraints $\CC$ contains a constraint $$C_{(i,j)}=\{(0,0),(0,1),(L,L-1),(L,L)\}
\cup \bigcup_{i=1}^{L-1}\{(i,i-1),(i,i),(i,i+1)\}$$ for each edge
$\{i,j\}\in E$, $i<j$, of the graph $G$.

To see that a feasible solution for the constructed instance $Q$ of CSP
corresponds to a feasible solution of the $L$-bounded cut problem of the same
cost, and vice versa, we observe the following. 

Given an optimal solution $F\subset E$ of the $L$-bounded cut problem with $s$
and $t$ in the same component, the vector of shortest path distances from $s$
to all other vertices in $(V,E\setminus F)$ yields a feasible solution for the
CSP instance $Q$ (to be more precise, if some of the distances are larger than $L$,
we replace in the vector every such value by $L$); given an optimal solution $F\subset E$ of the $L$-bounded cut
problem with $s$ and $t$ in the different components, we obtain a feasible
solution for $Q$ by assigning the value $0$ to every vertex in the
$s$--component and the value $2$ to every vertex in the $t$--component. 
Note that in both cases the cost of the $L$-bounded cut and the cost of the
$CSP$ instance $Q$ are the same. On the other hand, for every feasible solution 	
$(z_1,\ldots,z_{n})$ of the instance $Q$, the set $F=\{\{u,v\}\ \mid \ 
|z_u-z_v|>1\}$ is an $L$-bounded cut of the same cost.
Finally, we note that the constraint graph of $Q$ coincides with the original graph $G$.

For the vertex version of the $L$-bounded cut problem, we extend the domain of every vertex
$v$ by an extra element $-1$ representing the fact that $v$ belongs to the $L$-bounded
cut, and we adjust the hard and soft constraints appropriately.

\end{appendix}

\end{document}